\documentclass[
 reprint,
 amsmath,amssymb,
 aps,
]{revtex4-2}

\usepackage{graphicx}
\usepackage{hyperref}
\usepackage[linesnumbered, ruled, vlined, titlenotnumbered]{algorithm2e}
\usepackage{amsmath, amsfonts, amssymb, amsthm, color}

\usepackage[capitalise]{cleveref}
\crefname{algocf}{alg.}{algs.}
\Crefname{algocf}{Algorithm}{Algorithms}

\newtheorem{theorem}{Theorem}

\newtheorem{corollary}{Corollary}
\newtheorem{lemma}{Lemma}

\newcommand{\ket}[1]{|#1\rangle}

\newcommand{\plog}{{p_{\log}}}

\DeclareMathOperator{\restart}{res}
\newcommand{\prestart}{{p_{\restart}}}
\DeclareMathOperator{\ovspace}{Q}
\DeclareMathOperator{\ovtime}{G}
\newcommand{\qubitoverhead}{{\omega_{\ovspace}}}
\newcommand{\gateoverhead}{{\omega_{\ovtime}}}
\DeclareMathOperator{\CliNR}{CliNR}
\DeclareMathOperator{\CZNR}{CZNR}

\newcommand{\N}{{\mathbb{N}}}

\begin{document}

\title{Low-cost noise reduction for Clifford circuits}

\author{Nicolas Delfosse}
\author{Edwin Tham}
\affiliation{
    IonQ Inc.
}

\date{\today}

\begin{abstract}
We propose a Clifford noise reduction (CliNR) scheme that provides a reduction of the logical error rate of Clifford circuit with lower overhead than error correction and without the exponential sampling overhead of error mitigation.
CliNR implements Clifford circuits by splitting them into sub-circuits that are performed using gate teleportation. A few random stabilizer measurements are used to detect errors in the resources states consumed by the gate teleportation.
This can be seen as a teleported version of the CPC scheme~\cite{roffe2018protecting, debroy2020extended, van2023single}, with offline fault-detection making it scalable.
We prove that CliNR achieves a vanishing logical error rate for families of $n$-qubit Clifford circuits with size $s$ such that $nsp^2$ goes to 0, where $p$ is the physical error rate, meaning that it reaches the regime $ns = o(1/p^2)$ whereas the direct implementation is limited to $s = o(1/p)$.
Moreover, CliNR uses only $3n+1$ qubits, $2s + o(s)$ gates and has zero rejection rate.
This small overhead makes it more practical than quantum error correction in the near term and our numerical simulations show that CliNR provides a reduction of the logical error rate in relevant noise regimes.
\end{abstract}

\maketitle

\section{Introduction}

Quantum error correction (QEC) and quantum error mitigation (QEM) are heavily-studied techniques for coping with imperfect quantum hardware.

Fault-tolerant quantum computing (FTQC) relies on quantum error correction~\cite{shor1995scheme} to achieve a vanishing logical error rate at constant physical error rate (below some threshold value)~\cite{aharonov1997fault, aliferis2006quantum}.
This approach is scalable as it comes with a polynomial asymptotic overhead in terms of qubit and gate count, but the error correction overhead is massive in practice.
For example, the factorization of RSA-2,048 may require surface codes using 1,457 physical qubits per logical qubit~\cite{gidney2021factor}.
In the asymptotic regime, constructions based on quantum LDPC codes and code concatenation achieve smaller overhead in terms of qubit count~\cite{gottesman2013fault, breuckmann2021quantum, panteleev2022asymptotically, leverrier2022quantum, pattison2023hierarchical, yoshida2024concatenate} or gate count~\cite{yamasaki2024time, zhou2024algorithmic}.
However, the cost remains daunting for explicit examples.
Recent simulations of quantum LDPC codes outperforming surface codes~\cite{tremblay2022constant, higgott2023constructions, bravyi2024high, scruby2024high} consume respectively 49, 32, 24 and 22 physical qubits per logical qubit, and extra qubits are needed to required for logical operations.
For comparison, the scheme proposed in this paper uses only 3 physical qubits per logical qubit.

Over the past few years, several building blocks of a fault-tolerant quantum computing architecture have been demonstrated experimentally~\cite{egan2021fault, ryan2021realization, postler2023demonstration, abobeih2022fault, krinner2022realizing, zhao2022realization, google2023suppressing, bluvstein2024logical, ryan2024high, mayer2024benchmarking}.
These experiments provide critical insights into the long-term goal of building a large-scale fault-tolerant quantum computer.
However, for current machines, it may be preferable to use more qubit-efficient techniques like quantum error mitigation.
Quantum error mitigation includes a variety of techniques reducing the impact of noise in a quantum circuit by collecting data from different variants of the circuit and post-processing this data~\cite{li2017efficient, temme2017error, cai2023quantum}.
It is already used in today's machines because it consumes only few or no extra qubits.
However, it is not a viable solution for large quantum circuits because the sampling cost, {\em i.e.}, the number of circuit executions required, generally grows exponentially with the size of the circuit.

In this work, we propose a scheme for noise reduction in Clifford circuits that fills the gap between the regime of FTQC and QEM. 
This Clifford noise reduction (CliNR) scheme is not fault-tolerant but it still reduces the logical error rate of Clifford circuits in practically relevant regimes, and it does so at the price of comparatively small overheads.

CliNR implements Clifford operations by gate teleportation~\cite{gottesman1999quantum}, which allows for offline preparation and checking of the associated stabilizer resource state.
A small number of random stabilizer measurements is used to detect errors in that resource state.
Gate teleportation is typically used to implement logical $T$ gates, together with $T$~state distillation~\cite{bravyi2005universal}.
However, fault-tolerance requires measuring a complete set of stabilizer generators of the $T$~state.
In CliNR, we measure only a few stabilizer generators to detect faults with a sufficiently large probability, ensuring the fault detection circuit remains small.

Our scheme can be seen as a teleported version of the coherent parity-check (CPC) scheme introduced in~\cite{roffe2018protecting}, which achieves a reduction of the logical error rate (by $50\%$ in \cite{debroy2020extended} and $34\%$ in \cite{gonzales2023quantum}) by measuring a few parity-checks.
Increasing the number of parity-checks in CPC leads to a more substantial reduction of the logical error rate, but results in an exponential sampling overhead~\cite{van2023single}.
The use of gate teleportation in CliNR avoids this issue because fault-detection is done offline -- an ancilla state is prepared and checked for faults, and then injected through gate teleportation only if no faults are observed.
If a fault is detected, only the ancilla state needs to be re-prepared, obviating the need to restart the entire computation from scratch.
This advantage holds with large Clifford circuits, as well as arbitrary quantum programs that can contain many Clifford sub-circuits, each of which can be realized through a CliNR procedure.

For large Clifford circuits, the sampling overhead can be kept in check through partitioning of the circuit into smaller sub-circuits with fewer gates, with each subcircuit being implemented using CliNR.
By adjusting the size of the sub-circuits, CliNR provides an extra flexibility to optimize the trade-off between the sampling overhead and the logical error rate.
Another advantage of our protocol is that the ancilla qubit used to measure the stabilizers is kept active for a shorter time than ancilla qubits in the CPC scheme, further reducing fault-detection noise stemming from idle errors.

We next consider the problem of implementing Clifford circuits with vanishing logical error rate and low overhead in the regime of vanishing physical error rate $p \rightarrow 0$.
The direct implementation provides a vanishing logical error rate for $n$-qubit Clifford circuits containing $s$ gates satisfying $sp \rightarrow 0$. 
\cref{theorem:snp_tradeoff} achieves a vanishing logical error rate for Clifford circuits with $nsp^2 \rightarrow 0$, meaning that the CliNR reaches circuits such that $ns = o(1/p^2)$, while the direct implementation requires $s = o(1/p)$.
\cref{cor:clifford_unitary_threshold} shows that CliNR implements arbitrary $n$-qubit Clifford unitaries (for which $s\lesssim n^2$) with vanishing logical error rate for all $n = O(p^{-2/3})$, whereas the direct implementation only reaches $O(p^{-1/2})$.
Notably, CliNR accomplishes this error suppression while consuming only $3n+1$ physical qubits and an average of $2s+o(s)$ single-qubit and two-qubit gates.

The remainder of this paper is organized as follows.
Background on quantum circuits is provided in \cref{sec:implementation}.
\cref{sec:clifford_noise_reduction} describes the CliNR circuit and presents our main results, \cref{theorem:CliNR_s0_r}, which provides a bounds on the performance of the CliNR circuits and \cref{theorem:snp_tradeoff}, which shows that one can implement Clifford circuits with $ns = o(1/p^2)$ with vanishing logical error rate.
Our numerical simulations are discussed in \cref{sec:numerics}.
\cref{app:cz_noise_reduction} considers a special case of CliNR, applicable to sequences of $CZ$ gates with reduced overheads, which could be relevant for chemistry applications~\cite{anand2023hamiltonians}.

\section{Implementation of a circuit}
\label{sec:implementation}

A {\em circuit with size} $s$ is a sequence of $s$ operations.
Each operation is either 
a preparation $\ket 0$ or $\ket +$,
a single-qubit unitary,
a controlled-Pauli ($CNOT = CX$, $CY$, or $CZ$), 
or the measurement of a qubit.
A measured qubit can be re-prepared and reused.

A {\em Clifford circuit} is a sequence of single-qubit Clifford unitaries and controlled-Paulis. 
These circuits take as an input a $n$-qubit state and output a $n$-qubit state.

We consider the standard circuit-level noise model with {\em physical error rate} $p$.
Preparations and single-qubit unitaries are followed by a Pauli error $X$, $Y$ or $Z$ with probability $p/3$ each.
Controlled-Paulis are followed by one of the 15 non-trivial Pauli errors acting on their support with probability $p/15$ each.
Measurement outcomes are flipped with probability $p$.
Errors corresponding to different operations are independent.
For now, we assume that all operations have the same error rate $p$ and that idle qubits are noiseless. 

Faults in a Clifford circuit propagate through the circuit, resulting in an error $E$ on the output state of the circuit that we refer to as the {\em output error}.
The output error can be computed by conjugating the circuit faults through the circuit~\cite{gottesman1998heisenberg}.

We say that a circuit $C'$ {\em implements} a $n$-qubit Clifford circuit $C$ if, in the absence of fault, for any input state, $C'$ produces the same output state as $C$.
The {\em logical error rate} for the implementation of $C$ with $C'$, denoted $\plog$, is defined to be the probability that faults propagate into a non-trivial output error.
In this definition, $C$ is the ideal circuit we wish to execute and $C'$ is the noisy circuit effectively implemented.
The circuit $C'$ is not restricted to Clifford unitaries and may contain measurements and ancilla qubits.
The extra qubits are traced-out at the end of the circuit to obtain a $n$-qubit output state as in $C$.

The number of qubits $n'$ and the size $s'$ of the implementation $C'$ are random variables because the circuit $C'$ may contains random or adaptive operations.
The {\em qubit overhead}, denoted $\qubitoverhead$, is defined as the ratio between the maximum number of qubits of the circuit $C'$ and the number of qubits of $C$.
The {\em gate overhead}, denoted $\gateoverhead$, is defined to be the ratio between the expected number of gates of $C'$ and the number of gates of $C$.
For the qubit overhead, we use the maximum number of qubits in $C'$ instead of the expectation because these qubits must physically exist in the machine, even if we do not use all of them at each execution of $C'$.

For example, the {\em direct implementation} of the circuit $C$ is the implementation $C' = C$.
Its qubit and gate overhead is $1$ and its logical error rate is upper bounded as 
$\plog \leq g_p(s)$
where $g_p(x) = 1 - (1-p)^x$ and $s$ is the size of $C$.

\section{Clifford noise reduction}
\label{sec:clifford_noise_reduction}

Throughout this section, $C$ is a $n$-qubit Clifford circuit and $r, t \geq 1$ are two integers.

\begin{figure}
    \centering
    \includegraphics[width=1\linewidth]{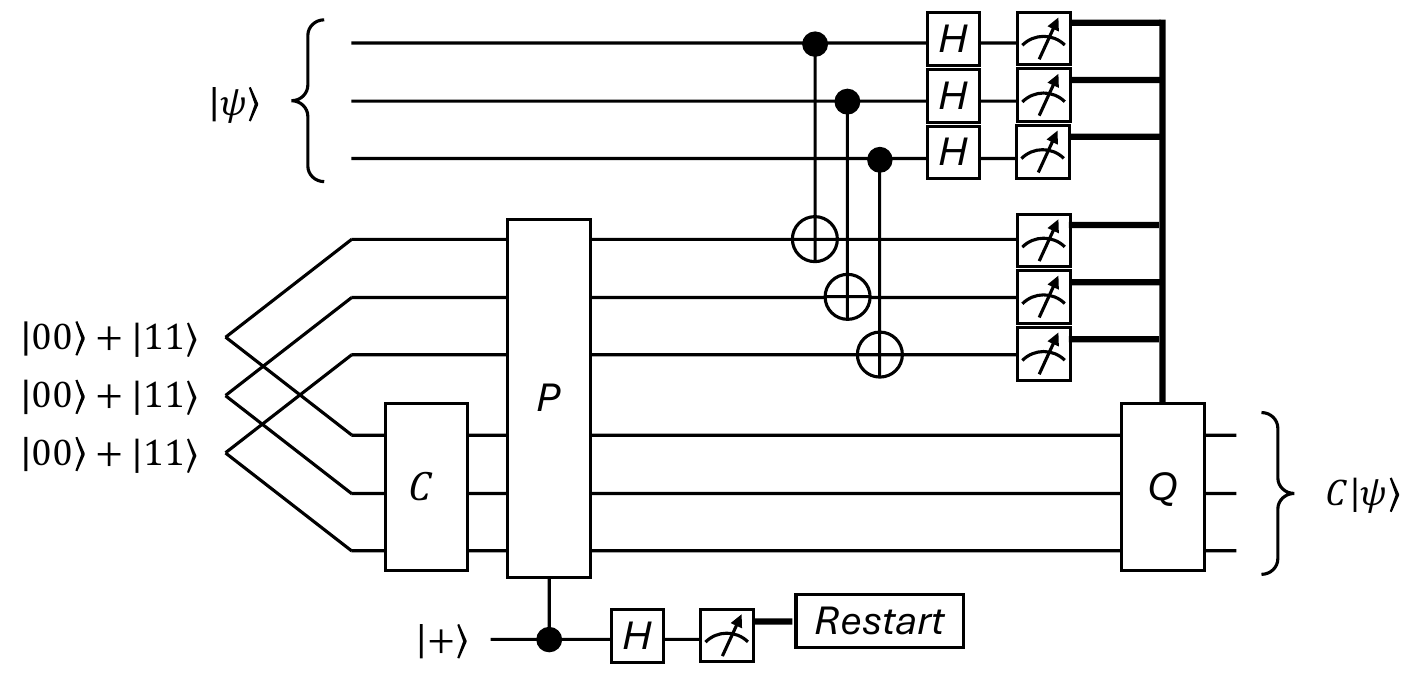}    
    \caption{
    The CliNR circuit (with $t=1$) implements a $n$-qubit Clifford circuit $C$ by gate teleportation using $2n+1$ ancilla qubits. 
    We insert $r$ stabilizer measurements before the CNOTs, to detect errors in the application of $C$ on the ancilla qubits. Here $r=1$ and we measure the stabilizer $P$. 
    If a stabilizer measurement returns a non-trivial outcome, the circuit restarts.
    Here $Q$ is a $n$-qubit Pauli operation that depends on the outcome of the measurement of the first $2n$ qubits.
    }
    \label{fig:clinr_circuit}
\end{figure}

Let us first define the {\em Clifford noise reduction} circuit, denoted $\CliNR_{t, r}(C)$, for $t=1$.
The circuit $\CliNR_{1, r}(C)$, represented in \cref{fig:clinr_circuit}, acts on $3n+1$ qubits.
We refer to the first three blocks of $n$ qubits as {\em first block}, {\em second block} and {\em third block} respectively and qubit $3n+1$ is an extra qubit.
The circuit $\CliNR_{1, r}(C)$ is made with the following operations.
\begin{enumerate}
    \item Prepare $n$ Bell states on qubits $n+i$ and $2n+i$ for $i=1,\dots, n$.
    \item Apply $C$ to the third block.
    \item Measure $r$ random stabilizers of the state supported on blocks two and three.
    \item If a stabilizer measurement returns a non-trivial outcome, restart the circuit.
    \item Apply $CNOT_{i, n+i}$, followed by $H_i$, for $i=1,\dots, n$.
    \item Measure the first and second block and apply the operation $Q$ (defined below) to the third block.
\end{enumerate}
The stabilizers are measured sequentially using qubit $3n+1$ and a sequence of up to $2n$ controlled-Pauli gates.
Following~\cite{van2023single}, we select the $r$ independent stabilizers uniformly at random so that each stabilizer measurement detect $50\%$ of the faults.
A new set of $r$ random stabilizer generators is selected for each run of the circuit.

If a stabilizer measurement returns a non-trivial outcome, we know that a fault has occurred and we restart the circuit.
Crucially, these faults do not affect the input state $\ket{\psi}$ because they are detected before the CNOTs connecting the first block to the last $2n+1$ qubits.

The final classically-controlled operation is 
\begin{align}
    Q = \prod_{i=1}^n \left( C^\dagger X_{2n+i} C \right)^{o_{n+i}} \left( C^\dagger Z_{2n+i} C \right)^{o_i}
\end{align}
where $o_i$ is the outcome of the measurement of the $i$ th qubit.

The main limitation of the CliNR implementation
with $t=1$ is that the restart probability increases rapidly as circuit size $s$ approaches $1/p$, resulting in a large gate overhead.
To maintain a low cost for larger circuits, we decompose the input circuit $C$ into $t$ smaller sub-circuits and we apply $\CliNR(1, r)$ to each of these sub-circuits.

For $t \geq 1$, the circuit $\CliNR_{t, r}(C)$ is defined by writing $C$ as the concatenation of $t$ sub-circuits $C_1, \dots, C_t$ and by applying $\CliNR_{1, r}$ to each sub-circuit.
In other words, it is the concatenation of the circuits $\CliNR_{1, r}(C_1), \dots, \CliNR_{1, r}(C_t)$. Formally, we need to swap the first and third blocks to concatenate two such circuits. These swaps can be done fully in software by relabelling the qubits.

In what follows, we denote $s_0 := \lceil s/t \rceil$.
If $s$ is a multiple of $t$, each sub-circuit has size $s_0$.
Otherwise, the first $s \mod t$ sub-circuits have size $s_0$ and the remaining ones have size $s_0 - 1$.

\begin{theorem}
\label{theorem:CliNR_s0_r}
    The circuit $\CliNR_{t, r}(C)$ implements the Clifford circuit $C$ with logical error rate
    \begin{align}
    \label{eq:theorem:CliNR_s0_r:plog_bound}
        \plog \leq & t \cdot \frac{g_p(3n+s_0) 2^{-r} + 2 g_p(2n + 3) + g_p(5n)}{(1-p)^{m_0}}
    \end{align}
    where $m_0 = 3n + s_0 + (2n + 3)r$.
    Moreover, the overhead satisfies $\qubitoverhead = 3 + \frac{1}{n}$ and
    $\gateoverhead \leq \frac{10n}{s_0} + \frac{2m_0}{s_0(1-p)^{m_0}}$.
\end{theorem}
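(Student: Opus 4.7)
The qubit overhead is immediate from the construction: the concatenation $\CliNR_{t,r}(C) = \CliNR_{1,r}(C_1)\cdots\CliNR_{1,r}(C_t)$ reuses the same $3n+1$ physical qubits across all $t$ blocks (with only a software relabelling between blocks), so $\qubitoverhead = (3n+1)/n = 3 + 1/n$. The substance of the theorem is therefore the logical error rate and the gate overhead.

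For the logical error rate, I would first reduce to a single block by a union bound: a non-trivial output error of $\CliNR_{t,r}(C)$ forces a non-trivial output error in at least one $\CliNR_{1,r}(C_i)$, so it suffices to bound $\plog(\CliNR_{1,r}(C_0))$ for a generic sub-circuit $C_0$ of size at most $s_0$ and multiply by $t$. For a single block I would use
\begin{align*}
    \plog = \frac{P(\text{nontrivial output error},\ \text{no restart})}{P(\text{no restart})},
\end{align*}
lower-bounding the denominator by $(1-p)^{m_0}$: if none of the $m_0 = 3n+s_0+(2n+3)r$ operations preceding the restart decision fails, every stabilizer outcome matches the ideal (trivial) outcome and no restart is triggered.

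The numerator is where the work lies. I would partition fault locations into three regions: (a) the ancilla-state preparation---the $n$ Bell pairs on blocks two and three together with the application of $C$ to block three---of total size $3n+s_0$; (b) the $r$ random stabilizer-measurement rounds on qubit $3n+1$, each of size at most $2n+3$; and (c) the final teleportation stage (CNOTs, Hadamards, destructive measurements, and the classically-controlled $Q$), of size at most $5n$. For region (a), the induced Pauli on blocks two and three causes a non-trivial output error only if it commutes with every one of the $r$ randomly chosen independent stabilizers; following~\cite{van2023single}, this occurs with probability at most $2^{-r}$, producing the term $g_p(3n+s_0)\,2^{-r}$. Region (c) has no subsequent check and contributes at most $g_p(5n)$. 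For region (b), a fault occurring in round $i$ (whether it flips the outcome of that round or propagates via the controlled-Paulis onto blocks two and three) must still evade the remaining $r-i$ random checks, contributing at most $g_p(2n+3)\,2^{-(r-i)}$; the geometric sum over $i=1,\dots,r$ is bounded by $2\,g_p(2n+3)$, which is the source of the middle term and of the ``$2$'' in front of it. Summing the three contributions and dividing by $(1-p)^{m_0}$ gives~\eqref{eq:theorem:CliNR_s0_r:plog_bound}.

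For the gate overhead I would work per sub-circuit: the number of attempts is geometric with success probability at least $(1-p)^{m_0}$, each failed attempt costs at most $m_0$ gates and the one successful attempt costs at most $m_0+5n$ gates, so the expected gate count per sub-circuit is at most $m_0/(1-p)^{m_0}+5n$. Summing over the $t$ sub-circuits and using $t/s \leq 2/s_0$ (which follows from $s_0 = \lceil s/t\rceil$ for $s_0 \geq 2$) yields the stated bound. The main obstacle throughout the argument is the careful treatment of region (b): I need to justify that a fault occurring inside the noisy detection circuitry still ``looks random'' relative to the subsequent random stabilizer generators, so that the $2^{-(r-i)}$ evasion probability really applies even though the ancilla qubit $3n+1$ is reused across rounds and the state entering round $i+1$ depends on earlier faults. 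Once this conditional-independence step is in place, the geometric-sum telescoping is the decisive observation that keeps the $r$-dependence out of the leading constants.
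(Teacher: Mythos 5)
Your proposal follows essentially the same route as the paper's proof: a union bound reducing to a single $\CliNR_{1,r}$ block, the same three-way partition of fault locations (preparation and $C$, the $r$ stabilizer-measurement rounds with the geometric sum giving the factor $2$, and the final teleportation stage), the same normalization of the numerator by the acceptance probability $(1-p)^{m_0}$, and the same geometric-restart accounting with $t s_0 \leq 2s$ for the gate overhead. The conditional-uniformity point you flag for region (b) is handled in the paper by the same appeal to uniformity of the syndrome under the random choice of stabilizers as in region (a), so there is no substantive difference.
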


This result is proven in \cref{appendix:proof_theorem}.
This bound on the logical error rate only holds for Clifford circuits because then $Q$ is comprised of just $n$ single-qubit Pauli gates. 
When $C$ is non-Clifford, $Q$ is more complex and generally consists of longer sequence of noisy gates.

\section{The $snp^2$ threshold}
\label{sec:threshold}

Consider the problem of implementing $n$-qubit Clifford circuits with vanishing logical error rate and low overhead, in the regime where the physical error rate $p$ goes to 0.

We are given a family $(C_j)_{j \in \N}$ of Clifford circuits, acting on $n_j$ qubits with size $s_j$, implemented with qubits with physical error rate $p_j$.
To keep the notation simple, we omit the index $j$ and we write $sp \rightarrow 0$ and $snp^2 \rightarrow 0$.

The expected number of faults in a circuit with size $s$ is $sp$. Therefore, if $sp \rightarrow 0$, the direct implementation achieves a vanishing logical error rate. 
The following result shows that CliNR goes beyond this threshold. 

\begin{theorem} 
[$snp^2$ threshold]
\label{theorem:snp_tradeoff}
    Consider a family of $n$-qubit Clifford circuits with size $s$.
    If $snp^2 \rightarrow 0$, then we can implement these circuits with     vanishing logical error rate.
    Moreover, we have $\limsup \qubitoverhead \leq 3$ and $\limsup \gateoverhead \leq 2$.
\end{theorem}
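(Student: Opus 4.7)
My plan is to apply \cref{theorem:CliNR_s0_r} with carefully chosen parameters $t$ (equivalently $s_0 = \lceil s/t \rceil$) and $r$, using the direct implementation as a fallback in degenerate sub-regimes. First, along any subsequence of the family where $sp \to 0$, the direct implementation already yields $\plog \le g_p(s) \le sp \to 0$ with $\qubitoverhead = \gateoverhead = 1$, so the claim holds trivially there. Passing to a further subsequence if necessary, I may thus assume $sp \ge c > 0$, in which case the hypothesis $snp^2 \to 0$ forces $np \le snp^2/c \to 0$. I therefore restrict to indices with $np \to 0$ and run $\CliNR_{t,r}(C)$.

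For such indices I pick $s_0$ and $r$ to balance the main error sources. Set $s_0 = \lceil \max(snp, n)\cdot\beta \rceil$ with $\beta \to \infty$ slowly enough that $\beta \cdot \max(snp^2, np) \to 0$; for example $\beta = 1/\sqrt{\max(snp^2, np)}$ is well-defined under $snp^2 \to 0$ and $np \to 0$. This gives simultaneously $s_0/\max(snp, n) = \beta \to \infty$ and $s_0 p = \beta \cdot \max(snp^2, np) \to 0$. Next set $r = \lceil \log_2(sp+1) + f \rceil$, with $f \to \infty$ slowly enough that $npf \to 0$, which is possible since $np \to 0$. Substituting into~\eqref{eq:theorem:CliNR_s0_r:plog_bound} and using $g_p(x) \le xp$: the undetected-fault term $t\, g_p(3n+s_0)\, 2^{-r} \le (sp + 3 snp/s_0)\, 2^{-r}$ vanishes by construction; the remaining terms $t(2 g_p(2n+3) + g_p(5n))$ are $\lesssim snp/s_0 + np$ and so also vanish; and the denominator $(1-p)^{m_0}$ tends to $1$ because $p m_0 = 3np + s_0 p + (2n+3)pr \to 0$. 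For the last cross term $pnr$ I use $sp \le 1/(np)$, which follows from $snp^2 \le 1$ eventually, so that $np \log_2(sp+1) \le np \log_2(1 + 1/(np)) \to 0$; combined with $npf \to 0$, this gives $pnr \to 0$. Hence $\plog \to 0$, and the gate overhead bound $10n/s_0 + 2 m_0/(s_0 (1-p)^{m_0})$ tends to $0 + 2\cdot 1/1 = 2$ via $n/s_0, nr/s_0 \to 0$. The qubit overhead is exactly $3 + 1/n$, giving $\limsup \qubitoverhead \le 3$ in the regime $n \to \infty$ that drives the nontrivial circuits.

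The delicate point is the joint choice of $r$: it must satisfy $r - \log_2(sp) \to \infty$ so that the undetected-fault contribution $sp \cdot 2^{-r}$ vanishes, yet also $(2n+3)\, p r \to 0$ so that the stabilizer-measurement subroutine does not itself introduce too much noise. The hypothesis $snp^2 \to 0$ is precisely the quantitative relation that makes both bounds compatible, via $sp \le 1/(np)$ eventually, which lets one bound $np \log_2 sp$ by $np \log_2(1/(np)) \to 0$. This is the crux of why CliNR reaches the $snp^2$ threshold.
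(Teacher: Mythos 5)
Your strategy is the same as the paper's: fall back on the direct implementation when $sp$ is already small, and otherwise run $\CliNR_{t,r}$ with $s_0 \sim \sqrt{sn}$ (your $s_0 = \beta\max(snp,n)$ reduces to $\sqrt{sn}$ when $sp\geq 1$, matching the paper's $t=\lfloor\sqrt{s/n}\rfloor$) and $r$ just large enough that $sp\,2^{-r}\to 0$ while $nrp\to 0$; the paper takes $r=\lfloor\log(s/n)\rfloor$, you take $r\approx\log_2(sp)+f$, and both work for the same reason, namely $np\log(sp)\leq np\log(1/(np))\to 0$. The CliNR half of your analysis is sound.

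The one genuine gap is the case split. An implementation must be assigned to each index of the family in advance, but ``passing to a further subsequence, assume $sp\geq c>0$'' with a \emph{constant} $c$ does not yield such an assignment: the two regimes ``$sp\to 0$ along a subsequence'' and ``$sp\geq c$'' do not partition the index set, and if you instead read your split as the per-index rule ``direct when $sp<c$, CliNR otherwise,'' the direct branch only gives $\plog\leq sp<c$, i.e.\ $\limsup\plog\leq c$ rather than $\plog\to 0$ (consider $sp\equiv c/2$ with $snp^2\to 0$). The fix is exactly the paper's move: use a \emph{vanishing} threshold, e.g.\ direct when $sp<\varepsilon^{1/4}$ with $\varepsilon:=snp^2$, which gives $\plog\leq\varepsilon^{1/4}\to 0$ on the direct branch and $np\leq\varepsilon^{3/4}\to 0$, $n/s\leq\sqrt{\varepsilon}\to 0$ on the CliNR branch, after which your parameter choices and estimates go through verbatim. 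A smaller point: for $\limsup\gateoverhead\leq 2$ you need $nr/s_0\to 0$, which requires $f\sqrt{np}\to 0$; this is not implied by your stated condition $npf\to 0$ (take $f=(np)^{-3/4}$), but since $f$ only needs to tend to infinity ``slowly enough,'' you can simply impose this extra condition on $f$.
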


This result reaches circuits with size up to $s = o(1/p^2)$, whereas the direct implementation achieves a vanishing logical error rate only if $s = o(1/p)$.
It is based on the CliNR implementation with parameters $t, r$ given in \cref{eq:proof:snp_tradeoff:def_r_t}.

\begin{proof}
Let $\varepsilon := snp^2$. 
By assumption, we have $\varepsilon \rightarrow 0$.
If  
$
s < \frac{\varepsilon^{1/4}}{p}
$, 
then $s = o(1/p)$ and the direct implementation is enough to achieve a vanishing logical error rate.

In the remainder of this proof, assume that 
$s \geq \frac{\varepsilon^{1/4}}{p}$.
Then, we have
\begin{align}
\label{eq:proof:snp_tradeoff:n_upper_bound}
n = \frac{\varepsilon}{s p^2} \leq \frac{\varepsilon^{3/4}}{p} = o\left(\frac{1}{p} \right)
\end{align}
and 
\begin{align}
\label{eq:proof:snp_tradeoff:n_over_s_vanishes}
\frac{n}{s} \leq \sqrt{\varepsilon} \rightarrow 0 \cdot
\end{align}

Consider $C' = \CliNR_{t, r}(C)$ with 
\begin{align}
\label{eq:proof:snp_tradeoff:def_r_t}
    r := \left \lfloor \log( \frac{s}{n} ) \right \rfloor \text{ and }
    t := \left \lfloor \sqrt{\frac{s}{n}} \right \rfloor \cdot
\end{align}
We want to apply \cref{lemma:asymptotic_bounds}, proven in \cref{appendix:proof_lemma}.
Let us first prove that this result applies.
Because $s_0$ is proportional with $\frac{s}{t}$, we have
\begin{align}
\label{eq:proof:snp_tradeoff:rn_over_s0_vanishes}
    \frac{rn}{s_0} = O\left( \frac{rnt}{s} \right)
    = O\left( \log\left(\frac{s}{n}\right) \sqrt{\frac{n}{s}} \right)
\end{align}
which goes to $0$ based on \cref{eq:proof:snp_tradeoff:n_over_s_vanishes}.
Moreover,
\begin{align}
\label{eq:proof:snp_tradeoff:s0p_vanishes}
    s_0 p = O\left( \frac{sp}{t} \right)
    = O\left( \sqrt{sn} \cdot p \right)
\end{align}
which goes to $0$ because $snp^2$ vanishes.
Because $\frac{rn}{s_0}$ and $s_0 p \rightarrow 0$ we can use \cref{lemma:asymptotic_bounds}.

To show that the logical error rate of the implementation $C'$ vanishes, it is sufficient to prove that the bound in \cref{eq:lemma:asymptotic_bounds:plog} tends to $0$.
Based on \cref{eq:proof:snp_tradeoff:s0p_vanishes}, the denominator goes to $1$, hence it is enough to show that 
$t s_0 p 2^{-r}$ and $tnp$ go to 0.
Using $t s_0 = O(s)$ and $2^{-r} = O(\frac{n}{s})$, we obtain
$t s_0 p 2^{-r} = O\left( np \right)$
which goes to 0 by \cref{eq:proof:snp_tradeoff:n_upper_bound}.
For $tnp$, we get
$
tnp 
= O\left( \sqrt{sn}p \right)
$
which goes to 0 because $snp^2$ does.
\end{proof}

\begin{corollary} 
[size-$n^\alpha$ threshold]
\label{cor:clifford_alpha_threshold}
Consider a family of $n$-qubit Clifford circuits with size $s = O(n^{\alpha})$ for some $\alpha > 0$.
If $n p^{\frac{2}{1 + \alpha}} \rightarrow 0$, 
we can implement these circuits with vanishing logical error rate.
Moreover, we have $\limsup \qubitoverhead \leq 3$ and $\limsup \gateoverhead \leq 2$.
\end{corollary}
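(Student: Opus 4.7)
The plan is to derive this corollary directly from \cref{theorem:snp_tradeoff}. Because that theorem already delivers a vanishing logical error rate together with the overhead bounds $\limsup \qubitoverhead \leq 3$ and $\limsup \gateoverhead \leq 2$ whenever $snp^2 \to 0$, it suffices to show that the present hypotheses, namely $s = O(n^\alpha)$ and $np^{2/(1+\alpha)} \to 0$, force $snp^2 \to 0$. The same CliNR implementation with parameters $(t, r)$ from \cref{eq:proof:snp_tradeoff:def_r_t} will then do the job.

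To verify the implication, I would just substitute the size bound into $snp^2$:
\begin{align}
snp^2 = O(n^{\alpha+1} p^2) = O\!\left( \bigl( n p^{2/(1+\alpha)} \bigr)^{1+\alpha} \right).
\end{align}
Since $1+\alpha > 0$ is a fixed constant and $np^{2/(1+\alpha)} \to 0$ by hypothesis, raising to this positive power preserves the limit, so $snp^2 \to 0$. The hypothesis of \cref{theorem:snp_tradeoff} is therefore satisfied, and both the error-rate conclusion and the overhead bounds are inherited unchanged.

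There is no real obstacle here: the exponent $2/(1+\alpha)$ is exactly the critical scaling at which the assumption $s = O(n^\alpha)$ saturates the $snp^2 \to 0$ threshold, and the verification reduces to the one-line exponent manipulation above. The only minor point I would double-check is that the implicit constant in $s = O(n^\alpha)$ causes no issue, which it does not because an $O(1)$ multiplicative factor preserves convergence to $0$, and the overhead bounds from \cref{theorem:snp_tradeoff} are stated as $\limsup$'s that are insensitive to such constants.
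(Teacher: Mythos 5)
Your proposal is correct and matches the paper's argument: the paper simply states that the corollary is an immediate application of \cref{theorem:snp_tradeoff}, and your exponent computation $snp^2 = O(n^{1+\alpha}p^2) = O((np^{2/(1+\alpha)})^{1+\alpha}) \to 0$ is exactly the verification being left implicit. Nothing further is needed.
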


This result is an immediate application of \cref{theorem:snp_tradeoff}. It applies to circuits acting on up to $n = O(p^{- \frac{2}{1 + \alpha}})$ qubits, whereas the direct implementation only reaches $n = O(p^{- \frac{1}{\alpha}})$.


\begin{corollary} 
[Clifford unitary threshold]
\label{cor:clifford_unitary_threshold}
If $p \rightarrow 0$ and $n = O(p^{- \frac{2}{3}})$, we can implement arbitrary $n$-qubit Clifford unitaries with vanishing logical error rate. 
Moreover, we have $\limsup \qubitoverhead \leq 3$ and $\limsup \gateoverhead \leq 2$.
\end{corollary}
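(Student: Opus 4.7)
The plan is to obtain this as an immediate specialization of \cref{cor:clifford_alpha_threshold} by invoking the standard size bound for Clifford unitaries.

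First, I would recall the folklore synthesis result (Aaronson--Gottesman tableau algorithm, and various tight constructions) that every $n$-qubit Clifford unitary can be realized as a Clifford circuit built from $O(n^2)$ one- and two-qubit Clifford gates. Consequently, any family of $n$-qubit Clifford unitaries can be viewed as a family of Clifford circuits with size $s = O(n^\alpha)$ for $\alpha = 2$, placing us exactly in the hypothesis of \cref{cor:clifford_alpha_threshold}.

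Second, I would plug $\alpha = 2$ into that corollary. The hypothesis $np^{2/(1+\alpha)} \to 0$ collapses to $np^{2/3} \to 0$, which is precisely the stated regime $n = O(p^{-2/3})$ (read in the vanishing sense consistent with the asymptotic setup of \cref{sec:threshold}, where all conditions are of the form ``$\to 0$''). The corollary then supplies simultaneously the vanishing logical error rate and the overhead bounds $\limsup \qubitoverhead \le 3$ and $\limsup \gateoverhead \le 2$, with no further tuning needed because the choices of $t$ and $r$ in \cref{eq:proof:snp_tradeoff:def_r_t} already drive those asymptotic overheads down to $3$ and $2$ respectively.

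There is essentially no technical obstacle once the $O(n^2)$ decomposition of Clifford unitaries is in hand. The only thing worth double-checking is that the constant hidden in $s = O(n^2)$ does not spoil the conclusion: since \cref{theorem:snp_tradeoff} depends on $s$ only through the product $snp^2$, and here $snp^2 = O(n^3 p^2)$, this quantity vanishes exactly when $n = o(p^{-2/3})$, which is the regime under consideration. So the reduction is clean and the corollary follows without any new computation.
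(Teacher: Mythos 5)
Your reduction is the right one in spirit (invoke a synthesis bound for Clifford unitaries and feed it into the $snp^2$ threshold), but there is a genuine gap at the boundary of the stated regime. The corollary claims the result for $n = O(p^{-2/3})$, which includes $n = \Theta(p^{-2/3})$, where $n^3p^2$ is bounded but does \emph{not} tend to zero. With only the folklore bound $s = O(n^2)$ you get $snp^2 = O(n^3p^2) = O(1)$, which is not enough for \cref{theorem:snp_tradeoff}; equivalently, \cref{cor:clifford_alpha_threshold} with $\alpha = 2$ requires $np^{2/3} \to 0$, i.e.\ $n = o(p^{-2/3})$. You in fact notice this yourself when you write that $n^3p^2$ ``vanishes exactly when $n = o(p^{-2/3})$,'' but that is strictly weaker than the regime the corollary asserts. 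The paper closes this gap by invoking the sharper Aaronson--Gottesman synthesis bound $s = O(n^2/\log n) = o(n^2)$~\cite{aaronson2004improved}: then $snp^2 = o(n^2)\cdot np^2 = o(n^3p^2) = o(1)$ even when $n^3p^2 = \Theta(1)$ (the case of bounded $n$ being trivial via direct implementation). So the missing ingredient is precisely the logarithmic saving in the Clifford synthesis cost; with $s = O(n^2)$ alone your argument proves only the $n = o(p^{-2/3})$ version of the statement.
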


This is an application of \cref{theorem:snp_tradeoff}, together with the fact that any $n$-qubit Clifford unitary can be implemented with a Clifford circuit $C$ with size 
$
s = O\left(n^2/\log(n) \right) = o(n^2)
$~\cite{aaronson2004improved}.
Again, this goes beyond the direct implementation which only reaches $n = O(p^{-\frac{1}{2}})$.


\section{Numerical results}
\label{sec:numerics}

We numerically estimate the logical error rate of CliNR for the implementation of uniform random Clifford unitaries, sampled using Algorithm~2 of~\cite{bravyimaslov2021}, synthesized into $s \approx n^2$ gates from the set {$H_i$, $S_i$, $CNOT_{i, j}$}.

We consider the circuit-level noise model with noise rate $p_2$ for two-qubit operations and $p_1 = p_2/10$ for single-qubit operations because two-qubit gates are typically the noisiest operations. 
Moreover, we include noise on idle qubits with rate $p_1$.

In \cref{fig:clinr_circuit}, we set $p_2 = 10^{-3}$ and $10^{-4}$.
Ten random Clifford circuits are generated for each $n$, and each circuit is simulated over $10^5$ shots with the direct implementation and $10^5$ shots with the CliNR implementation.
We observe that CliNR outperforms the direct implementation in relevant noise regimes.
We obtain a $2 \times$ reduction of the logical error rate for the implementation of 25-qubit Clifford circuits when $p_2 = 10^{-3}$.
For $p_2 = 10^{-4}$, we get a $4 \times$ reduction of the logical error rate for Clifford circuits on 60 qubits.

\begin{figure}
    \centering
    \includegraphics[width=1\linewidth]{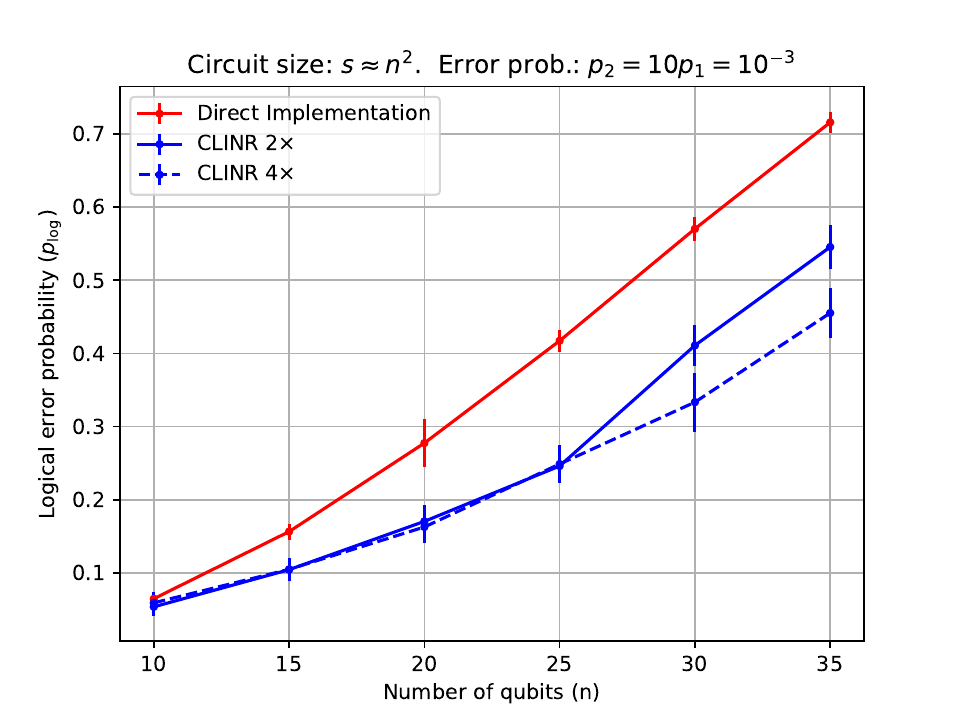}
    
    \includegraphics[width=1\linewidth]{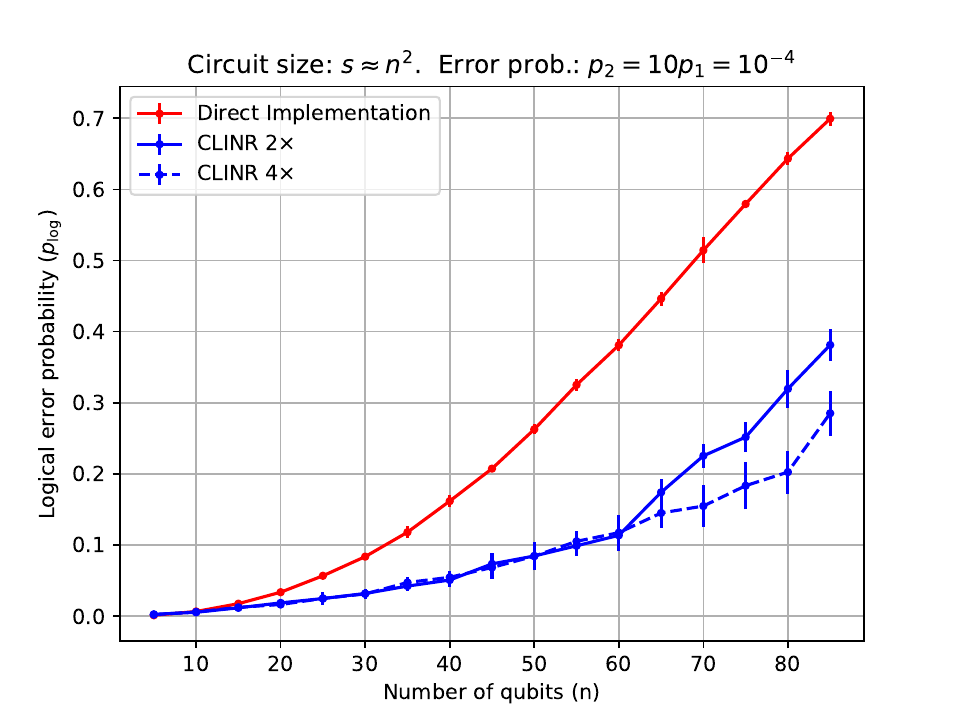}
    
    \caption{Averge logical error rate of the CliNR implementation of random Clifford circuits in two different noise regimes.
    We chose $r = \left \lfloor \log( \frac{s}{n} ) \right \rfloor$, as in \cref{eq:proof:snp_tradeoff:def_r_t}, and $t$ is the smallest integer such that the $\gateoverhead \leq 2$ for CliNR $2 \times$ and $\gateoverhead \leq 4$ for CliNR $4 \times$.
    The value of $t$ varies with $n$.}
    \label{fig:plots_realistic_noise}
\end{figure}

Instead of using stabilizers selected uniformly at random to detect faults, like the proof suggests, we restrict ourselves to stabilizers obtained by propagating weight-2 stabilizers of the Bell state $\frac{1}{\sqrt{2}}(\ket{00}+\ket{11})^{\otimes n}$ through $I \otimes U$.
The resulting stabilizers have weight at most $n+1$ instead of $2n$, making their measurements less noisy.
Clifford circuits with unevenly distributed noisy gates may require a modified choice of measured stabilizers better adapted to its shape and noise profile.

It is worth noting, we do not select a new set of stabilizers for every run, but do so in batches of $10^3$ shots instead, for expedience while performing numerical simulations.
This is justified for random Clifford circuits because they rapidly scramble faults as they propagate, but may not work well for shallower structured circuits.
We leave a more careful analysis of such circuits for a later work.

\section{Conclusion}
\label{sec:conclusion}

We proposed a method for error reduction in Clifford circuits offering a low-cost alternative to error correction in near term devices.
Clifford circuits are split into small sub-circuits that are implemented by gate teleportation. 
Faults are detected by measuring stabilizers of the resource states consumed by the gate teleportation circuit. 
Instead of stabilizer measurements, one could verify the resource states by measuring gauge operators of the spacetime code of the circuit~\cite{bacon2015sparse, delfosse2023spacetime}.
This paper focuses on fully connected qubits but the scheme can be adapted to other qubit topologies at a moderate price in term of overhead as demonstrated in~\cite{van2023single}.

This scheme is limited to Clifford circuits that are not computationally universal.
We envision applications wherein a suitable compiler collates Clifford gates within a given ``real-world'' circuit in order to yield large Clifford sub-circuits separated by thin layers of single-qubit gates, which is sufficient for universality.
Alternatively, CliNR can be applied to error-corrected qubits with universality achieved through magic states~\cite{bravyi2005universal}.

\section{Acknowledgement}

The authors thank Aharon Brodutch, Laird Egan, John Gamble, Andrii Maksymov, Jeremy Sage, Pat Tang and Brendan Wyker for insightful discussions.

\appendix

\section{Gate teleportation}

The gate teleportation circuit~\cite{gottesman1999quantum} is a key ingredient of the CliNR scheme.
\cref{fig:clinr_circuit}(a) shows the standard gate teleportation circuit for one qubit.
It performs simultaneously the teleportation of the state $\ket \psi$ from qubit 1 to qubit 3, and the application of a unitary gate $U$ to $\ket \psi$.

The one-bit teleportation circuit of \cref{fig:teleportation}(b), introduced in~\cite{zhou2000methodology}, performs the teleportation of a quantum state $\ket \psi$ without using any ancilla qubit.
If $I \otimes U$ commutes with the $CNOT$ gate of the one-bit teleportation circuit, we can play the same trick as in the original teleportation circuit to merge the one-bit teleportation and the application of $U$.
This results in a circuit that uses the ancilla state $U \ket +$ to apply the gate $U$ to $\ket \psi$.
In \cref{app:cz_noise_reduction}, we use this idea to build a state injection circuit for sequences of $CZ$ gates.

\begin{figure}[h]
    \centering

    \includegraphics[width=1\linewidth]{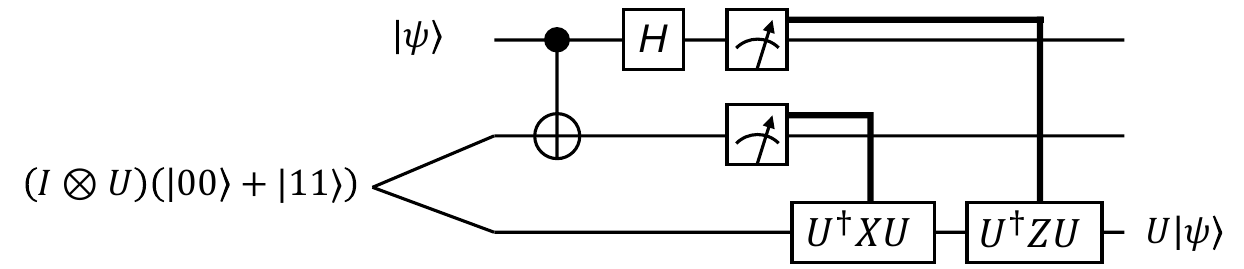}    

    (a)

    \vspace{.5cm}
    \includegraphics[width=0.5\linewidth]{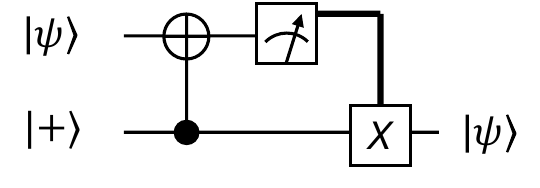}    
    
    (b)
    
    \caption{
    (a) Gate teleportation circuit implementing $U$~\cite{gottesman1999quantum}.
    When $U = I$, we recover the original teleportation circuit~\cite{bennett1993teleporting}.
    (b) One-bit teleportation circuit~\cite{zhou2000methodology}.
    }
    \label{fig:teleportation}
\end{figure}

\section{Bound on the performance of CliNR}
\label{appendix:proof_theorem}

The goal of this section is to prove \cref{theorem:CliNR_s0_r}.
We start with the following lemma, which bounds the performance of CliNR in the special case $t=1$.

\begin{lemma}
\label{lemma:CliNR}
    The circuit $\CliNR_{1, r}(C)$ implements the circuit $C$ with logical error rate
    \begin{align}
        \plog \leq & \frac{g_p(3n+s) 2^{-r} + 2g_p(2n + 3) + g_p(5n)}{(1-p)^m}
    \end{align}
    where $m = 3n + s + (2n + 3)r$.
    Moreover, the overhead satisfies $\qubitoverhead = 3 + \frac{1}{n}$ and $\gateoverhead \leq  \frac{5n}{s} + \frac{m}{s(1-p)^{m}}$.
\end{lemma}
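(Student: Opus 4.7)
The plan is to decompose $\CliNR_{1,r}(C)$ into consecutive stages, bound the contribution of faults in each stage, and assemble a bound on $\plog = \Pr[\text{nontrivial output error}\mid\text{accept}]$. First, I would count operations per stage: $3n$ for Bell pair preparation ($2n$ single-qubit preparations and $n$ CNOTs), $s$ for applying $C$ to the third block, $2n+3$ per stabilizer measurement (prepare $\ket{+}$ on qubit $3n+1$, up to $2n$ controlled-Paulis, one $H$, and one $Z$-basis measurement), and $5n$ for the closing teleportation ($n$ CNOTs, $n$ Hadamards, $2n$ measurements, and the $n$ Pauli corrections comprising $Q$). This yields $m = 3n + s + (2n+3)r$ operations up to and including the last stabilizer check, with a further $5n$ operations thereafter.

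The overheads are bookkeeping. The circuit uses $3n+1$ qubits, so $\qubitoverhead = 3 + 1/n$. For the gate overhead, set $a := \Pr[\text{one run is accepted}]$; a completely fault-free run always passes, so $a \geq (1-p)^m$. The number of runs to first acceptance is geometric with mean $1/a$; a rejected run performs at most $m$ operations (stopping at the first non-trivial stabilizer outcome), and the accepted run performs at most $m+5n$. The expected total operation count is therefore at most $\frac{1-a}{a}\,m + (m+5n) = \frac{m}{a} + 5n \leq \frac{m}{(1-p)^m} + 5n$, and dividing by $s$ gives the stated bound on $\gateoverhead$.

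The main content is the logical error rate bound. Using $\Pr[\text{accept}] \geq (1-p)^m$, it suffices to upper bound $\Pr[\text{accept and nontrivial output error}]$ by the numerator via a union bound over three fault-location categories. \textbf{(a)}~Faults among the first $3n+s$ operations propagate to some Pauli error $E$ on blocks two and three just before the stabilizer measurements; if $E$ lies in the resource-state stabilizer group it causes no output error after teleportation and can be discarded, and otherwise, because the $r$ generators $G_1,\dots,G_r$ are drawn uniformly and independently from that group, the events ``$G_i$ anti-commutes with $E$'' are i.i.d.\ $\mathrm{Bernoulli}(1/2)$, so $E$ escapes all $r$ checks with probability at most $2^{-r}$. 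This contributes at most $g_p(3n+s)\,2^{-r}$. \textbf{(b)}~A fault inside the $k$-th stabilizer measurement routine can propagate a Pauli error onto the data that is still subjected to the remaining $r-k$ independent random checks, escaping with probability at most $2^{-(r-k)}$; summing $g_p(2n+3)\,2^{-(r-k)}$ over $k=1,\dots,r$ is bounded by the geometric series $2\,g_p(2n+3)$. \textbf{(c)}~Faults in the final $5n$ operations cannot be detected and contribute at most $g_p(5n)$. Adding the three terms yields the numerator of the claimed bound.

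The main obstacle is justifying the $2^{-r}$ suppression in (a) and the geometric-sum bound in (b): concretely, that for any fixed non-stabilizer Pauli $E$ and independent uniform stabilizer generators $G_1,\dots,G_r$, the commutation-with-$E$ bits are i.i.d.\ fair coin flips. This is a standard abelian-group character computation, since ``anti-commutes with $E$'' is a nontrivial homomorphism $\mathcal{S} \to \{\pm 1\}$ whose kernel has index two, but it has to be applied at the right moment in the Heisenberg-picture tracking of the propagated error, and one must handle the subtlety that faults inside a measurement routine may contribute to the ancilla/outcome (potentially triggering a rejection, which is favorable) as well as to the data qubits (where only the latter matters for the conditional-on-acceptance error bound).
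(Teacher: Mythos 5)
Your proposal is correct and follows essentially the same route as the paper's proof: the same operation count giving $m=3n+s+(2n+3)r$, the same three-way split of fault locations with the $2^{-r}$ and geometric-series $2\,g_p(2n+3)$ bounds, the same division by the acceptance probability $\geq (1-p)^m$, and the same geometric-restart accounting for the gate overhead. The only difference is cosmetic: you spell out the index-two-subgroup justification for the fair-coin detection probability, which the paper simply asserts.
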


\begin{proof}
Denote $C' = \CliNR_{1, r}(C)$.
The fact that $C'$ implements $C$ is clear because, in the absence of fault, the stabilizer measurements have no effect, and removing these measurements, we obtain the standard gate teleportation circuit.

Let us bound the overhead of the implementation.
The qubit overhead is clear.
To bound the gate overhead, we need an upper bound on the probability $\prestart$ of $C'$ to restart during the stabilizer measurements.
Naively, it is upper bounded by the probability that at least one fault occurs before the end of the stabilizer measurement loop. 
This leads to
\begin{align}
\label{eq:proof:CliNR:prestart_bound}
\prestart \leq 1 - (1-p)^{m}    
\end{align}
where $m = 3n + s + (2n + 3)r$ is the number of circuit operations before the last stabilizer measurement (included).
Each restart adds at most $m$ operations to the circuit.
Moreover, if no restart is triggered, $5n$ operations are added to finish the gate teleportation.
Therefore, the expected number of operations executed by $C'$ is upper bounded by 
\begin{align}
    & 5n + \sum_{i \geq 1} i m \cdot \prestart^{i-1} (1 - \prestart) \\
    & \leq 5n + \frac{m}{1 - \prestart} \\
    & \leq  5n + \frac{m}{(1-p)^{m}} \cdot
\end{align}
The gate overhead is given by dividing this bound by the size $s$ of $C$.

To bound the logical error rate of the implementation of $C$ by $C'$, assume that a combination of faults occurs in the circuit, resulting in a output error $E$.
Below, we say a combination of faults is non-trivial if they propagate into a non-trivial output error.
If the output error $E$ is non-trivial, then, one of the following events must have occurred.

(1) Non-trivial faults occur before the beginning of the stabilizer measurements and they are not detected by the stabilizer measurements.

(2) Non-trivial faults occur during the $i$ th stabilizer measurement and they are not detected by the subsequent stabilizer measurements

(3) Non-trivial faults occur after the last stabilizer measurement.

Denote by $P_1, P_2$ and $P_3$ the respective probabilities of these three events.
Then, the logical error rate of the CliNR implementation is upper bounded by 
\begin{align}
\label{eq:proof:CliNR:plog_bound}
\plog \leq \frac{P_1 + P_2 + P_3}{1-\prestart} \cdot
\end{align}
Next, we bound the three terms in the numerator.

Case (1). The circuit $C'$ contains $3n + s$ noisy operations before the first stabilizer measurement.
The probability that at least one fault occurs in these operations is $1 - (1-p)^{3n + s}$.
If these faults are non-trivial, the outcome $\sigma \in \{0,1\}^r$ for the $r$ stabilizer measurements is a uniform random bit-string.
This distribution remains uniform when including the effect of fault occurring during the measurement.
Therefore, we have 
\begin{align}
\label{eq:proof:CliNR:P1_bound}
    P_1 \leq (1 - (1-p)^{3n + s}) 2^{-r}
\end{align}
because the faults are undetected only if $\sigma = 0$.

Case (2). Each stabilizer measurement contains at most $2n+3$ operations.
By the same argument as in (1), a non-trivial combination of faults in the $i$ th measurement induces a uniform outcome on the last $r-i$ stabilizer measurements.
Therefore, it is undetected with probability at most $2^{-r+i}$, which yields
\begin{align}
\label{eq:proof:CliNR:P2_bound}
    P_2 
    & \leq \sum_{1=1}^{r} (1 - (1-p)^{2n + 3}) 2^{-r+i} \\
    & \leq 2 (1 - (1-p)^{2n + 3})
\end{align}

Case (3). Counting the circuit operations after the last measurement, we get
\begin{align}
\label{eq:proof:CliNR:P3_bound}
    P_3 \leq (1 - (1-p)^{5n}) \cdot
\end{align}
Injecting \cref{eq:proof:CliNR:prestart_bound,,eq:proof:CliNR:P1_bound,,eq:proof:CliNR:P2_bound,,eq:proof:CliNR:P3_bound} in \cref{eq:proof:CliNR:plog_bound} provides the bound on $\plog$.
\end{proof}

\begin{proof}[Proof of \cref{theorem:CliNR_s0_r}]
The fact that $\CliNR_{t, r}(C)$ implements $C$ is a consequence of \cref{lemma:CliNR} which shows that $\CliNR_{1, r}(C_i)$ implements the sub-circuit $C_i$.

The logical error rate is at most $t$ times the logical error rate per sub-circuit, for which we use the bound from \cref{lemma:CliNR}.

The qubit overhead is the same as in $\CliNR_{1, r}$ because the $2n+1$ ancilla qubits are re-used for the implementation of each sub-circuit. 
The expected number of operations in the implementation of one of the sub-circuits is at most $s_0 \gateoverhead'$ where $\gateoverhead'$ is the maximum gate overhead for a sub-circuit.
By linearity of the expectation, the expected number of operations for the implementation of the entire circuit $C$ is upper bounded by 
$t s_0 \gateoverhead'$.
Using $t s_0 \leq t \lceil s/t \rceil \leq s + t \leq 2s$ (if $t \geq s$, the last $t-s$ sub-circuits $C_i$ have size 0), we get
\begin{align}
\gateoverhead 
    \leq \frac{t s_0 \gateoverhead'}{s}
    \leq 2 \gateoverhead'
\end{align}
Plugging in the bound on the gate overhead obtained in \cref{lemma:CliNR} leads to the bound announced in \cref{theorem:CliNR_s0_r}.
\end{proof}

\section{Proof of \cref{lemma:asymptotic_bounds}}
\label{appendix:proof_lemma}

The goal of this section is to prove the following lemma.

\begin{lemma}
\label{lemma:asymptotic_bounds}
If $nr = o(s_0)$ and $s_0p \rightarrow 0$, then
\begin{align}
\label{eq:lemma:asymptotic_bounds:plog}
    \plog = 
    O\left(
        \frac{t s_0 2^{-r} p + 9tnp}{1 - s_0 p} \cdot
    \right)
\end{align}
Moreover, the overhead satisfies $\limsup \qubitoverhead \leq 3$ and $\limsup \gateoverhead \leq 2$.
\end{lemma}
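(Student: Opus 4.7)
The plan is to derive \cref{lemma:asymptotic_bounds} as an asymptotic simplification of the explicit bound in \cref{theorem:CliNR_s0_r}, by applying Bernoulli's inequality to every instance of $g_p(x) = 1-(1-p)^x$ and to $(1-p)^{m_0}$, and then absorbing lower-order terms using the two hypotheses $nr = o(s_0)$ and $s_0 p \to 0$.

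First I would bound the numerator. Using $g_p(x) \leq xp$, the bracket in \cref{eq:theorem:CliNR_s0_r:plog_bound} is at most
\[
(3n+s_0)\, p\, 2^{-r} + 2(2n+3)\, p + 5np = (3n+s_0)\, p\, 2^{-r} + (9n+6)\, p.
\]
Since $nr = o(s_0)$ and $r \geq 1$, we have $n = o(s_0)$, so $3n+s_0 \leq 2 s_0$ for large enough indices, and $(9n+6)p = O(np)$. Multiplying by $t$ gives a numerator bound of $O(t s_0 p\, 2^{-r} + tnp)$, which matches the numerator $t s_0 p\, 2^{-r} + 9 tnp$ of \cref{eq:lemma:asymptotic_bounds:plog} up to an absolute constant absorbed by the big-$O$.

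Next I would handle the denominator. The quantity $m_0 = s_0 + 3n + (2n+3)r$ satisfies $m_0 = s_0(1+o(1))$ by the hypothesis $nr = o(s_0)$, and hence $m_0 p \leq 2 s_0 p \to 0$. Bernoulli gives $(1-p)^{m_0} \geq 1 - m_0 p \geq 1 - 2 s_0 p$, and because $s_0 p \to 0$, the ratio $(1 - s_0 p)/(1 - 2 s_0 p)$ stays bounded, so $1/(1-p)^{m_0} = O\!\left(1/(1 - s_0 p)\right)$. Combining this with the numerator bound yields the claimed estimate on $\plog$.

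For the overheads, \cref{theorem:CliNR_s0_r} gives $\qubitoverhead = 3 + 1/n$, whose $\limsup$ is at most $3$ in the regime $n \to \infty$ of interest. For the gate overhead, the same theorem yields $\gateoverhead \leq 10 n / s_0 + 2 m_0 / (s_0 (1-p)^{m_0})$; the first summand vanishes because $n = o(s_0)$, while $m_0/s_0 \to 1$ and $(1-p)^{m_0} \to 1$ force the second summand to tend to $2$, giving $\limsup \gateoverhead \leq 2$. The only genuine verification point is that $nr = o(s_0)$ is strong enough to yield both $m_0/s_0 \to 1$ and $m_0 p \to 0$ without any auxiliary assumption on $r$; once that is checked, the rest is bookkeeping.
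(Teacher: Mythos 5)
Your proposal is correct and follows essentially the same route as the paper: both derive the lemma by asymptotically simplifying the bound of \cref{theorem:CliNR_s0_r}, discarding the terms $3n$ and the additive constants using $nr=o(s_0)$ (hence $n=o(s_0)$ and $r=o(s_0)$ since $n,r\geq 1$), and controlling the denominator via $m_0=s_0(1+o(1))$ and $s_0p\to 0$. Your use of the exact inequalities $g_p(x)\leq xp$ and $(1-p)^{m_0}\geq 1-m_0p$ is a marginally cleaner substitute for the paper's Taylor expansions, but the argument is otherwise identical, including the treatment of the overheads.
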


\begin{proof}
We compute the asymptotic equivalent of the upper bound in \cref{eq:theorem:CliNR_s0_r:plog_bound} when $s_0 p$ vanishes.
The expansion $g_p(\ell) = \ell p + o\left( (\ell p)^2 \right)$ leads to a numerator dominated by 
\begin{align}
t \left( (A s_0 + A' n)2^{-r} + (Bn + B') \right) p
\end{align}
with $A = 1, A' = 3, B = 9$ and $B' = 6$.
We can discard the term $A' n$, dominated by $A s_0$ (because we assume $nr = o(s_0)$) and the term $B'$, dominated by $Bn$.
For the denominator, we use $(1-p)^m = 1 - mp + o((mp)^2)$.
This proves \cref{eq:lemma:asymptotic_bounds:plog}.

The bound on $\qubitoverhead$ is clear from \cref{theorem:CliNR_s0_r}.
In the upper bound on $\gateoverhead$, the term $\frac{10n}{s_0}$ vanishes and $\frac{2m_0}{s_0 (1-p)^{m_0}}$ goes to $2$ because 
$\frac{2m_0}{s_0} \rightarrow 2$ and 
$\frac{1}{(1-p)^{m_0}} = e^{-m_0p + o((m_0p)^2)} \rightarrow 1$. 
\end{proof}

\section{CliNR and Circuit Shape}
\label{app:circuit_shape}

The efficacy of CliNR at reducing noise varies significantly with the circuit's shape parameter $\alpha$, where the circuit size is given by $s=n^{\alpha}$.
At a given width~$n$, smaller $\alpha$ implies a circuit with fewer gates in the intended Clifford block $C$, even as the overheads associated with CliNR remain independent of $\alpha$.
Thus, for any given $n$, circuits with smaller $\alpha$ are more likely to favor direct implementation, whereas CliNR becomes increasingly effective as $\alpha \to 2$. Nevertheless at sufficiently large qubit counts ($n>p^{-1/{\alpha+1}}$), direct implementation becomes very noisy, whereas from \cref{cor:clifford_alpha_threshold} we expect a CliNR implementation to remain effective.

In Fig.~\ref{fig:Grid}, we plot differences in the logical error rate between CliNR and direct implementation of Clifford circuits of various sizes.
Each cell in the color-coded grids represent random Clifford circuits of distinct $n$ and $\alpha$.
Random Clifford circuits at a given $n$ and $\alpha$ are constructed by randomly sampling $n^{\alpha}$ gates from the set \{$H$, $S$, $CX$\}.

Logical error rate differences, $p_{\text{log}}^{\text{direct}}-p_{\text{log}}^{\text{CliNR}}$ are depicted as a color map.
At small $n$ and $\alpha$, we observe direct implementation yielding error rates competitive with CliNR, without the overheads of the latter.
As $\alpha$ or $n$ increase, CliNR becomes strongly favored with substantial error rate improvements over direct implementation.

\begin{figure}
    \centering
    \includegraphics[width=7cm]{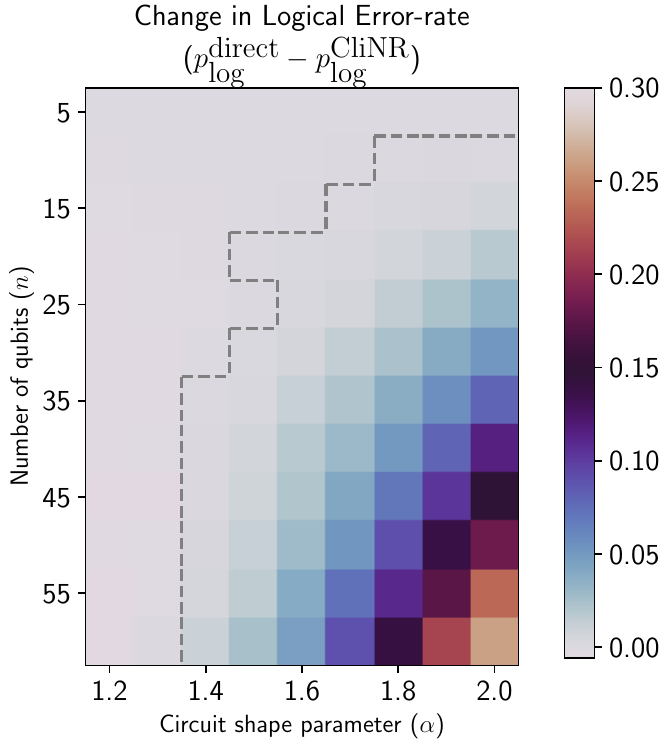}
    \caption{Improvement of the logical error rate of random Clifford circuits under CliNR for the circuit-level noise model of \cref{sec:numerics} with $p_{2}=p_{1}/10=10^{-4}$
    as functions of the qubit count $n$ and the circuit shape $\alpha$ (for circuit size $s=n^{\alpha}$). Gray dashed lines are guides to the eye that delineates between regions where direct implementation is favored (white, upper-left) vs where CliNR yields lower noise (lower-right).
    \label{fig:Grid}}
\end{figure}

\section{CZ noise reduction circuit}
\label{app:cz_noise_reduction}

\begin{figure*}
    \centering
    \includegraphics[width=0.95\linewidth]{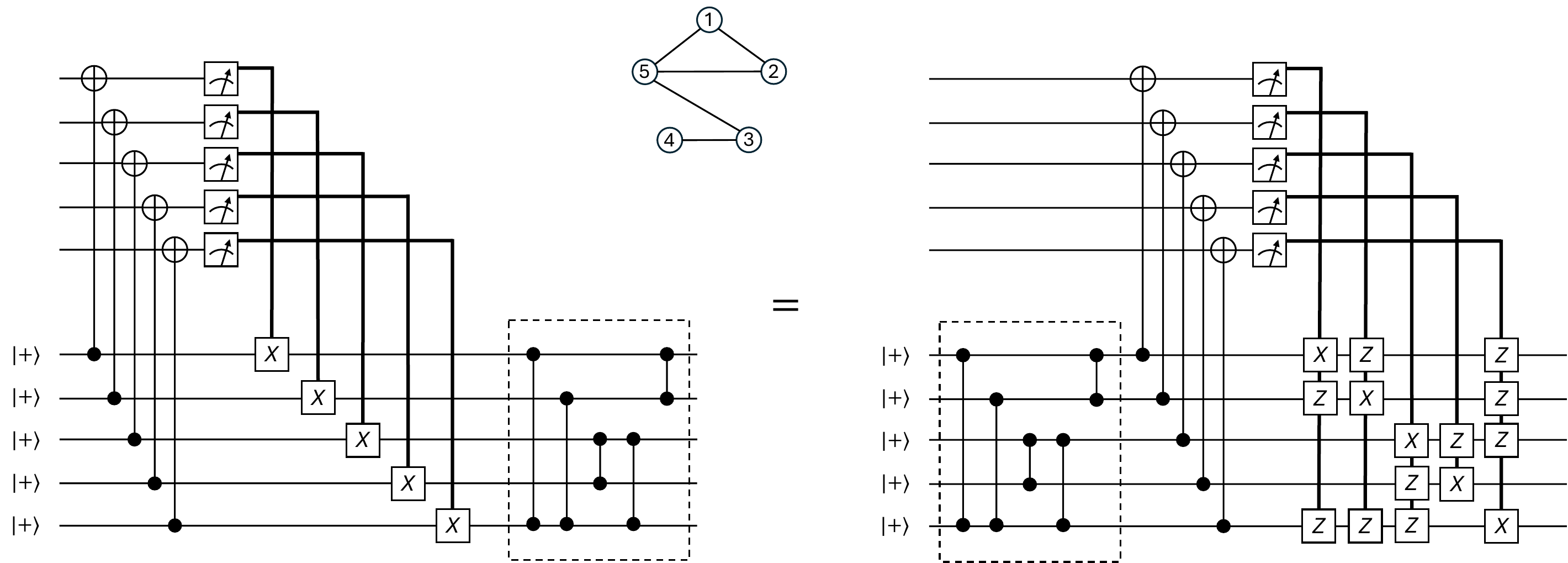}    
    \caption{
    Middle: A graph $G$ with five vertices and five edges. The sequence of $CZ$ gates corresponding to the edges of $G$ defines a unitary $U_G$.
    Left: Teleportation of five qubits, followed by the application of $U_G$ (dashed box) to the teleported qubits.
    Right: The graph state injection circuit associated with $G$.
    Commuting the sequence of $CZ$ gates through the circuit, we see that these two circuits are equivalent.
    }
    \label{fig:graph_state_injection_circuit}
\end{figure*}

Here, we show that the CliNR circuit can be simplified in the case of a Clifford circuit containing only CZ gates.
The basic idea is to implement a sequence of CZ using the one-bit teleportation, shown in \cref{fig:teleportation}(b), instead of the original teleportation circuit.
This results in a circuit that can be seen as a state injection circuit, similar to the $T$-state injection circuit, where the resource state is a graph state instead of a $T$-state.

\subsection{Graph state injection circuit}
\label{subsec:cz_noise_reduction:graph_state_injection}

Any $n$-qubit unitary made with a sequence of $CZ$ gates can be represented by a graph $G$ with $n$ vertices, corresponding to the qubits, and whose edges are in one-to-one correspondence with the $CZ$ gates in the circuit. 
The unitary associated with $G$ is denoted $U_G$.

The {\em graph state injection circuit}, represented in \cref{fig:graph_state_injection_circuit}(right), associated with a graph $G$ with $n$ vertices is the $2n$-qubit circuit defined as follows.
Assume that the vertices of $G$ are indexed by $1,2, \dots, n$ and the qubits are indexed by $1,2,\dots, 2n$.
First, qubits $n+1, n+2, \dots, 2n$ are initialized in the state $\ket +$.
Then, for each edge $\{i, j\}$ in $G$, apply a $CZ$ gate between qubit $n+i$ and qubit $n+j$.
This prepares the so-called {\em graph state}, denoted $\ket G$, on the last $n$ qubits\footnote{For a review of graph states see~\cite{hein2006entanglement}.}.
Then, implement $CNOT$ gates controlled on qubit $n+i$ and targeting qubit $i$ for $i=1,2,\dots, n$.
After the $CNOT$, the first $n$ qubits are measured.
If the outcome of the $i$-th measurement is non-trivial, the Pauli operation
\begin{align}
    X_{n+i} \prod_{j \in N(i)} Z_{n+j}
\end{align}
is applied, where $N(i)$ denotes the set of neighbors of vertex $i$ in $G$.

The following lemma shows that the graph state injection circuit consumes the graph state $\ket G$ prepared on the second block of qubits to implement the operation $U_G$.

\begin{lemma} [Graph state injection circuit]
\label{lemma:graph_state_injection}
The state-injection circuit simultaneously performs the teleportation of the state of the first $n$ qubits onto the last $n$ qubits, and apply the unitary $U_G$ to this state.
\end{lemma}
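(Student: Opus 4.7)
The plan is to establish the lemma by starting from the reference circuit on the left of \cref{fig:graph_state_injection_circuit} -- which performs a one-bit teleportation on each of the $n$ input qubits and then applies $U_G$ to the teleported block -- and rewriting it into the graph state injection circuit on the right by commuting $U_G$ backward, past each preceding layer, all the way to the initial ancilla preparation.

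First I would invoke the one-bit teleportation identity recalled in \cref{fig:teleportation}(b): preparing an ancilla in $\ket+$, applying a $CNOT$ with control on the ancilla and target on the input qubit, measuring the input in the computational basis with outcome $m$, and applying $X^m$ to the ancilla teleports the input state onto the ancilla. Running this procedure in parallel on all $n$ input qubits and then applying $U_G$ on the ancilla block yields a circuit that, by construction, teleports the first block onto the second block and applies $U_G$ to the teleported state. It therefore suffices to show that this reference circuit is equivalent to the graph state injection circuit.

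The rewriting proceeds by commuting $U_G$ past each of the three layers that currently precede it, in time-reversed order. Commuting $U_G$ past the classically-controlled correction $\prod_i X_{n+i}^{m_i}$ uses the conjugation identity $U_G X_{n+i} U_G^\dagger = X_{n+i} \prod_{j \in N(i)} Z_{n+j}$, obtained by iterating $CZ_{a,b} X_a CZ_{a,b} = X_a Z_b$ over the edges of $G$ incident to vertex $i$; this turns the correction into exactly $\prod_i \bigl( X_{n+i} \prod_{j \in N(i)} Z_{n+j} \bigr)^{m_i}$, matching the Pauli operation in the lemma's definition. Commuting $U_G$ past the measurements of the first block is immediate, since the two act on disjoint qubits. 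Commuting $U_G$ past the layer $\prod_i CNOT_{n+i, i}$ reduces to checking that $[CZ_{n+i, n+j}, CNOT_{n+k, k}] = 0$ for each edge $\{i,j\}$ of $G$ and each $k$: if $k \notin \{i,j\}$ the supports are disjoint, and otherwise both operators are diagonal in the computational basis on the shared ancilla qubit, so they commute. Once $U_G$ has been moved to the very beginning, it acts on $\ket+^{\otimes n}$ and produces the graph state $\ket G$ by definition, so the circuit becomes exactly the graph state injection circuit of \cref{fig:graph_state_injection_circuit}(right).

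The main (mild) obstacle is the commutation $[CZ_{n+i, n+j}, CNOT_{n+i, i}] = 0$, because these two gates share the ancilla qubit $n+i$; I expect this to take one line once one observes that $CZ$ is diagonal in the computational basis on $n+i$, and so is the control leg of $CNOT_{n+i, i}$, so the two commute on the shared qubit.
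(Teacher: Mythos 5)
Your proof is correct and follows essentially the same route as the paper, which simply commutes the $CZ$ gates through the circuit and declares the equivalence clear from the figure; you run the identical commutation argument in the reverse direction ($U_G$ from the end back to the $\ket{+}^{\otimes n}$ preparation) and supply the details the paper leaves implicit, namely the conjugation $U_G X_{n+i} U_G^\dagger = X_{n+i}\prod_{j\in N(i)} Z_{n+j}$ and the commutation of $CZ_{n+i,n+j}$ with $CNOT_{n+k,k}$ on the shared control qubit. No gaps.
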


\begin{proof}
By commuting the sequence of $CZ$ to the end of the graph state injection circuit, we see that this circuit is equivalent to applying the one-bit teleportation circuit to the $n$ input qubits and applying the $CZ$ gates, that is $U_G$, to the output qubits of the teleportations.
This is clear by examination of \cref{fig:graph_state_injection_circuit}.
\end{proof}

\subsection{Noise reduction in CZ circuits}
\label{subsec:cz_noise_reduction:noise_reduction}

\begin{figure}
    \centering
    \includegraphics[width=1\linewidth]{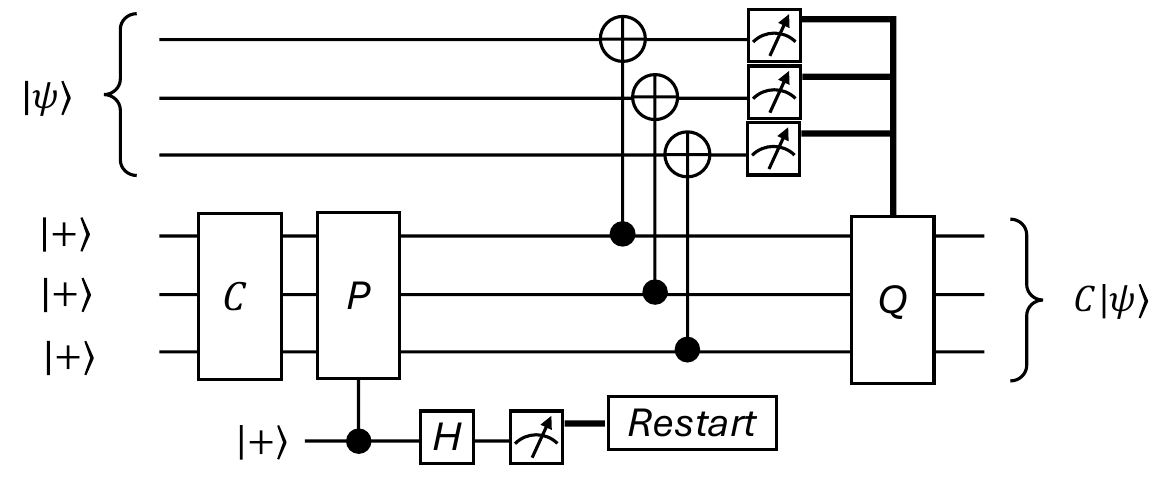}    
    \caption{
    The CZNR circuit implements a $n$-qubit $U$ made with a sequence of $CZ$ using $n+1$ ancilla qubits.
    The ancilla qubits are used to prepare a graph state and check it using $r$ stabilizer measurements.
    In this figure $r=1$ and the stabilizer $P$ of the graph state is measured.
    It is a simplification of the CliNR circuit of \cref{fig:clinr_circuit} in the case of sequences of $CZ$ gates.
    }
    \label{fig:cznr_circuit}
\end{figure}

Similarly to the CliNR circuit, the $CZ$ {\em noise reduction} (CZNR) circuit, described in \cref{fig:cznr_circuit}, is obtained by inserting $r$ stabilizer measurements in the graph state injection circuit, right after the graph state preparation.
Like in the CliNR circuit, the stabilizers are independent stabilizers, selected uniformly at random in the stabilizer group of the first $n$ ancilla qubits and a new set of $r$ stabilizer generators is selected for each run of the circuit.
The circuit restarts if one of the stabilizer measurements returns a non-trivial outcome.
The CZNR circuit is limited to $CZ$ sequences but because it is built from the one-bit teleportation circuit, it consumes only $n$ ancilla qubits instead of $2n$.
We denote by $\CZNR_{t, r}(C)$ the analog of $\CliNR_{t, r}(C)$ for $CZ$ sequences.

The following result is adapted from \cref{theorem:CliNR_s0_r} to $CZ$ sequences, that is Clifford circuits containing only $CZ$ gates.

\begin{figure}
    \centering
    \includegraphics[width=.9\linewidth]{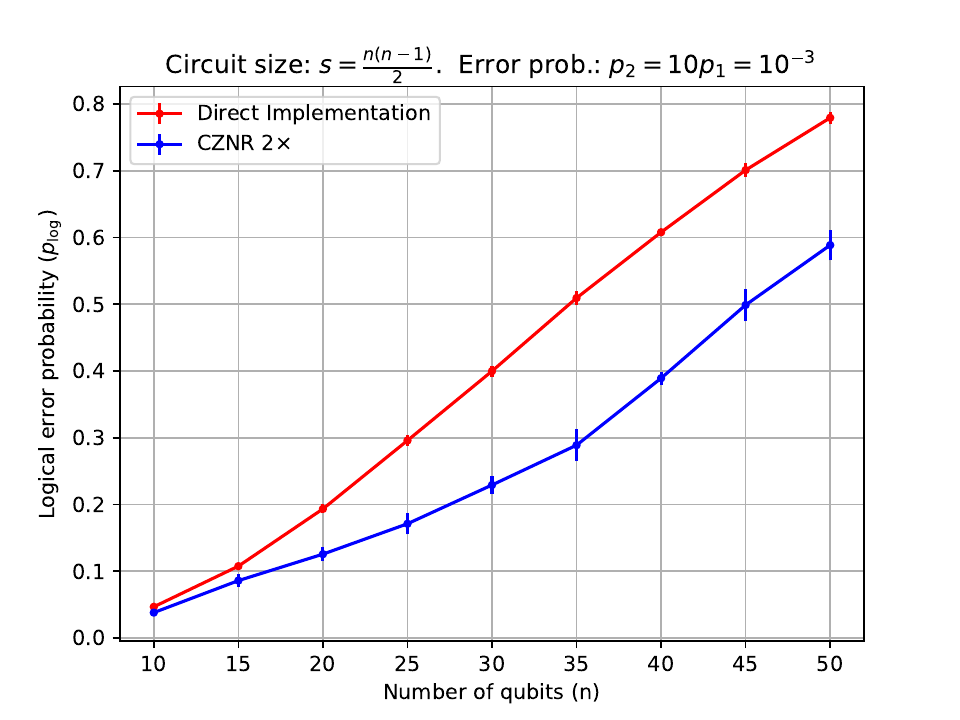}
    \includegraphics[width=.9\linewidth]{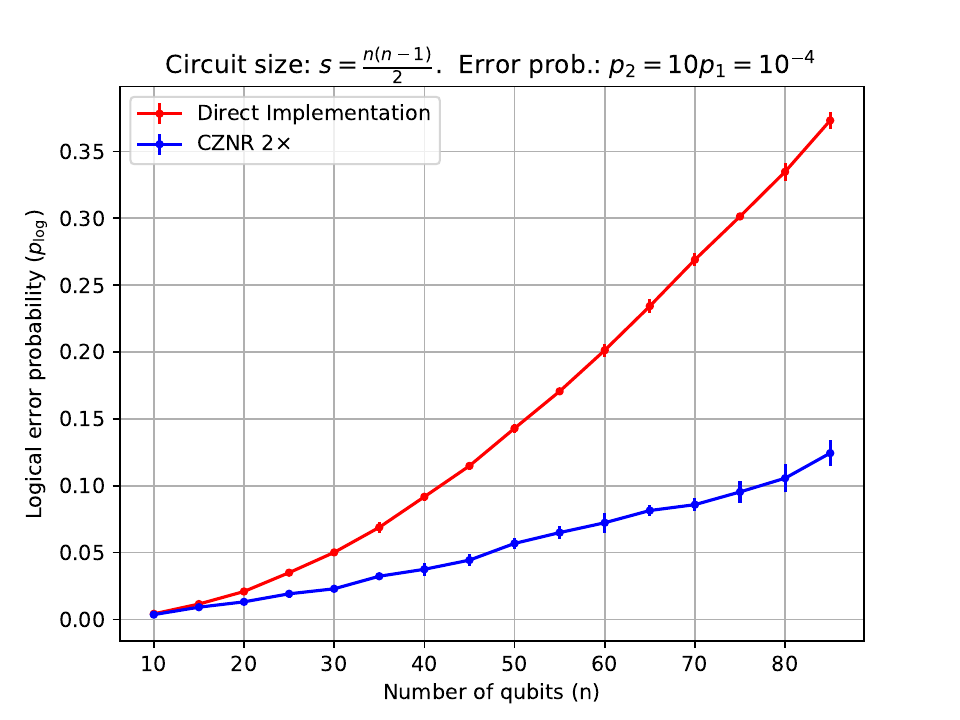}
    \caption{Average logical error rate of a dense block of CZ gates with and without CZNR, in two noise regimes. As in the CliNR case, here $t$ is selected for CZNR$_{t,r}$ such that the overhead is bounded, $\omega_G\leq2$.
    \label{fig:cznr_numerics}}
\end{figure}

\begin{theorem}
\label{theorem:CZNR}
    The circuit $\CZNR_{t, r}(C)$ implements a $CZ$ sequence $C$ with logical error rate
    \begin{align}
        \plog \leq & t \cdot \frac{g_p(n+s) 2^{-r} + 2 g_p(n + 3) + g_p(3n)}{(1-p)^{m_0}}
    \end{align}
    where $m = n + s_0 + (n + 3)r$.
    Moreover, the overhead satisfies 
    $\qubitoverhead = 2 + \frac{1}{n}$ 
    and 
    $\gateoverhead \leq \frac{6n}{s_0} + \frac{2 m_0}{s_0(1-p)^{m}}$.
\end{theorem}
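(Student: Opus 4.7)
The plan is to mirror the structure used to prove \cref{theorem:CliNR_s0_r}: first establish a $t=1$ version of the bound as a lemma analogous to \cref{lemma:CliNR}, then reduce the general $t$ case to $t$ concatenated copies of the $t=1$ case via the sub-circuit decomposition argument.

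For the $t=1$ case, I would denote $C' = \CZNR_{1,r}(C)$. Correctness (that $C'$ implements $C$ in the absence of faults) follows immediately from \cref{lemma:graph_state_injection}: removing the stabilizer measurements recovers the graph state injection circuit, which implements $U_G$, and on the faultless graph state the measured stabilizers all yield trivial outcomes, so the inserted measurements are invisible. Next I would audit the circuit to extract the operation counts appearing in the theorem: there are $n$ ancilla preparations and $s$ $CZ$ gates before the first stabilizer measurement, giving $n+s$ noisy operations; each stabilizer measurement uses an ancilla preparation, a Hadamard, at most $n$ controlled-Paulis (the ancilla couples only to the $n$-qubit graph-state block, unlike the $2n$ case of CliNR), and a measurement, for at most $n+3$ operations; and the tail consists of the $n$ CNOTs into the graph state qubits, the $n$ measurements of the first block, and the classically-controlled $X_{n+i}\prod_{j\in N(i)}Z_{n+j}$ corrections, totaling at most $3n$ operations.

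With these counts in hand, the probabilistic analysis copies the one in the proof of \cref{lemma:CliNR} verbatim. The restart probability satisfies $\prestart \leq 1 - (1-p)^{m}$ with $m = n + s + (n+3)r$. Splitting output errors by when the offending faults occur gives $\plog \leq (P_1 + P_2 + P_3)/(1-\prestart)$, where $P_1 \leq g_p(n+s) 2^{-r}$ handles faults before the first measurement (using that a non-trivial combination of faults produces a uniformly random outcome string over uniformly random stabilizer generators, following~\cite{van2023single}), $P_2 \leq \sum_{i=1}^r g_p(n+3) 2^{-r+i} \leq 2 g_p(n+3)$ handles faults inside the $i$-th measurement block, and $P_3 \leq g_p(3n)$ handles faults after the last measurement. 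The gate overhead bound $\gateoverhead \leq 6n/s + 2m/(s(1-p)^m)$ would be derived as in \cref{lemma:CliNR} via a renewal argument: each restart adds at most $m$ operations and the successful tail adds $3n$, giving an expected operation count of at most $3n + m/(1-p)^m$, then multiplied by the factor $2$ that comes out when passing to general $t$ below.

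To conclude the general $t$ case, I would apply the $t=1$ lemma to each sub-circuit $C_i$ of size $s_0$. A union bound over the $t$ sub-circuits injects the overall factor $t$ in front of the logical error rate. The qubit overhead stays $2 + 1/n$ because the $n+1$ ancillas are reused across sub-circuits. Linearity of expectation and the bound $t s_0 \leq 2s$ (the same trick used in the proof of \cref{theorem:CliNR_s0_r}) convert the per-sub-circuit gate overhead from $\gateoverhead' \leq 3n/s_0 + m_0/(s_0(1-p)^{m_0})$ into the stated $\gateoverhead \leq 6n/s_0 + 2m_0/(s_0(1-p)^{m_0})$. The main obstacle is not conceptual but bookkeeping: one must carefully verify the $3n$ count for the tail (making sure the neighborhood-dependent Pauli correction really fits inside this budget of Pauli frame updates that can be pushed through the remaining Clifford, so that no new two-qubit gates are added) and confirm the $(n+3)$ cost per stabilizer measurement; once these are pinned down, the random-stabilizer detection argument carries over with no modification from the CliNR analysis.
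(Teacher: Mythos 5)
Your proposal is correct and follows exactly the route the paper intends: the paper in fact omits this proof, stating only that it is ``an immediate application of the same technique used to prove \cref{lemma:CliNR} and \cref{theorem:CliNR_s0_r},'' and your argument is precisely that application. Your operation counts ($n+s$ before the first check, $n+3$ per stabilizer measurement, $3n$ for the tail, $m_0 = n + s_0 + (n+3)r$) and the reuse of the restart/renewal and union-bound arguments all match what the CliNR analysis yields when specialized to the one-bit-teleportation-based CZNR circuit.
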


We omit the proof of this result as it is an immediate application of the same technique used to prove \cref{lemma:CliNR} and \cref{theorem:CliNR_s0_r}.

In \cref{fig:cznr_numerics}, we show numerical results comparing direct implementation of a dense ($\mathcal{O} (n^2)$) block of CZ gates against a CZNR equivalent.
As described in the main text for CliNR, we similarly cap the $\omega_G\leq 2$ for CZNR$_{t,r}$ here by varying $t$.

\pagebreak

\end{document}